\theoremstyle{definition}
\newtheorem{thm}{Theorem}
\newtheorem{prop}{Proposition}[section]
\newtheorem{lem}[prop]{Lemma}
\newtheorem{cor}[prop]{Corollary}
\newtheorem{fact}[prop]{Fact}
\newtheorem{rem}[prop]{Remark}
\newtheorem{dfn}[prop]{Definition}
\newcommand{\dft}[1]{\textbf{\textit{#1}}}
\newcommand{\abs}[1]{\left|#1\right|}
\newcommand{\set}[1]{\left\{#1\right\}}
\newcommand{\sucht}{\,\middle|\,}
\newcommand{\Z}{\mathbf{Z}}
\newcommand{\WI}{\mathrm{WI}}
\newcommand{\WinningStrategy}{\mathrm{WinningStrategy}}
\newcommand{\filter}{\mathrm{filter}}
\newcommand{\solvable}{\mathrm{solvable}}
\newcommand{\sol}{\mathrm{sol}}
\newcommand{\true}{\mathbf{true}}
\newcommand{\false}{\mathbf{false}}
\renewcommand{\th}{{}^{\mathrm{th}}}
\DeclareMathOperator{\poly}{poly}
\title{Finding a Winning Strategy for Wordle is NP-complete}
\author{
  Will Rosenbaum\\
  Amherst College\\
  wrosenbaum@amherst.edu
}      
\date{\today}
\begin{document}

\maketitle

\thispagestyle{empty}

\begin{abstract}
  In this paper, we give a formal definition of the popular word-guessing game Wordle. We show that, in general, determining if a given Wordle instance admits a winning strategy is NP-complete. We also show that given a Wordle instance of size $N$, finding a winning strategy that uses $g$ guesses in the worst case (if any) can be found in time $N^{O(g)}$. 
\end{abstract}

\section{Introduction}
\label{sec:intro}

The game Wordle is an elegant word-guessing game released by Josh Wardle in October 2021~\cite{Victor2022-wordle}. The premise and gameplay are simple: a player has six chances to guess an (unknown) target five letter word. For each letter in the guessed word, the player receives feedback of the following form:
\begin{enumerate}
\item the letter appears in the same position in the target word,
\item the letter appears in the target word, but in a different position
\item the letter does not appear in the target word.
\end{enumerate}
Using this feedback, a player can adaptively choose a sequence of up to six words. The player wins if they correctly guess the target word within six guesses, and the player loses otherwise. Since Wordle's release, it has become immensely popular. As of the end of January, 2022, Wordle had millions of daily players (myself included)~\cite{Tracy2022-new-york}.

In this paper, we give a formal description of (a generalization of) Wordle. Specifically, a \dft{Wordle instance} specifies lists of possible target words (all of the same length, $d$), a dictionary of allowed guesses, and the maximum number of guesses $g$ the player can make. The decision problem $\WinningStrategy$ is to determine if there is a (deterministic) strategy that will result in a player always winning---i.e., following the strategy will always result in the player correctly guessing any given target word in $g$ or fewer tries.

Our main result is to show that $\WinningStrategy$ is NP-complete (Theorem~\ref{thm:hardness}). We prove NP-hardness via a reduction from the minimum dominating set (MDS) problem (Lemma~\ref{lem:hardness}). To show that $\WinningStrategy$ is in NP, we describe how a Wordle strategy can be encoded as a \emph{strategy tree} (Definition~\ref{dfn:strategy-tree}), whose size is polynomial in the Wordle instance size. The maximum number of guesses required to win corresponds to the tree's depth, and the validity of a purported strategy tree can be verified in polynomial time. Thus, the strategy tree affords a complete and sound certificate for a winning Wordle strategy.

Additionally, we show that for any Wordle instance $\WI$ with fixed game length $g$, an (optimal) winning strategy (if any) can be found in $\abs{\WI}^{O(g)}$ time. In particular, for fixed constant $g$, a winning strategy can be found in polynomial time.

\subsection{Related Work}

The very recent (indpendent) work of Lokshtanov and Subercaseaux~\cite{Lokshtanov2022-wordle} also establishes the NP-hardness finding a winning strategy for Wordle. Their definition of Wordle is slightly different from our Definition~\ref{dfn:wordle-instance}. In their model, guessed words must be chosen from the set of possible target words, whereas our definition allows for a strictly larger set of allowable guess words,\footnote{In our notation, Lokshtanov and Subercaseaux assume $D = W$.} although they do suggest the generalization we consider as Open Problem~3. The NP-hardness proof of Lokshtanov and Subercaseaux is strictly stronger than our Lemma~\ref{lem:hardness}, as their argument implies that $\WinningStrategy$ remains NP-hard when words have length at most $5$, and that $\WinningStrategy$ is $W[2]$-hard when parameterized by game length.

On the other hand, our Theorem~\ref{thm:hardness} establishes that $\WinningStrategy \in \mathrm{NP}$, which is listed as Open~Problem~1 by Lokshtanov and Subercaseaux. Our proof of $\WinningStrategy \in \mathrm{NP}$ also extends to the context of their Open~Problem~3, where membership in the dictionary $D$ of allowed queries may be defined implicitly via a finite automaton (or any polynomial-time computable function). Specifically, our argument shows that in this case, $\WinningStrategy$ remains in NP, even though the dictionary size $\abs{D}$ may be exponential in the instance size. We refer the reader to Lokshtanov and Subercaseaux~\cite{Lokshtanov2022-wordle} for further discussion and related work.

\section{Preliminaries}
\label{sec:preliminaries}

In this section, we formally define a Wordle Instance and the associated task of finding a winning strategy for an instance.

\begin{dfn}
  \label{dfn:wordle-instance}
  A \dft{Wordle instance} $\WI = (\Sigma, d, D, W, g)$ consists of:
  \begin{itemize}
  \item a finite \dft{alphabet} $\Sigma$,
  \item a \dft{dimension} $d \in \Z^+$,
  \item a \dft{dictionary} $D \subseteq \Sigma^d$,
  \item a \dft{word list} $W \subseteq D$,
  \item a \dft{game length} $g \in \Z^+$.
  \end{itemize}
\end{dfn}

We note the distinction between the dictionary $D$ and the word list $W$. The player can use any word $u \in D$ as a guess during gameplay, while the target word will always be chosen from $W$.

\begin{dfn}
  \label{dfn:wordle-oracle}
  Given a word $w \in W$ and a query $u \in D$, the \dft{Wordle oracle} $\Omega_w$ returns the response $r = \Omega_w(u) \in \set{0, 1, 2}^d$, defined as follows:
  \begin{equation}
    \label{eqn:wordle-oracle}
    r_i =
    \begin{cases}
      2 &\text{if } w_i = u_i\\
      1 &\text{if } \abs{\set{j \leq i \sucht u_i = u_j, u_j \neq w_j}} \leq \abs{\set{j \sucht w_j = u_i, w_j \neq u_j}}.\\
      0 &\text{otherwise}
    \end{cases}
  \end{equation}
\end{dfn}

Intuitively, $r_i = 1$ indicates that the letter $u_i$ appears in $w$, but not in position $i$. The somewhat complicated second condition in~(\ref{eqn:wordle-oracle}) is to deal with case that $w$ or $u$ contains repeated letters. If a letter $x$ is repeated $k$ times in $w$, then up to the first $k$ instances of $x$ in $u$ can get a corresponding response $1$, while subsequent instances of $x$ in $u$ will get a $O$ responses. For example, if $w = \texttt{HELLOOO}$, we would have $\Omega_w(\texttt{OOOOHHH}) = \texttt{1110100}$.

The goal of Wordle is given a Wordle instance $\WI$ and query access to a Wordle oracle $\Omega$, find the unique word $w \in W$ for which $\Omega = \Omega_w$ using as few queries to $\Omega$ as possible.

\begin{dfn}
  \label{dfn:strategy}
  Given a Wordle instance $\WI$, a \dft{strategy} defines for any sequence of queries and responses $u^1, r^1, u^2, r^2, \ldots, u^{k-1}, r^{k-1}$ a next query
  \[
  u^k = q(u^1, r^1, u^2, r^2, \ldots, u^{k-1}, r^{k-1}) \in D.
  \]
  We say that the strategy \dft{succeeds in round $k$} if $\Omega(u^k) = \texttt{22}\cdots\texttt{2}$, hence we can conclude that $\Omega = \Omega_w$ for $w = u^k$. We say that a strategy is a \dft{winning strategy} if for all $w \in W$, the strategy succeeds in at most $g$ (the game length) rounds.
\end{dfn}

There are several natural computational questions that arise from the definitions above. Given a Wordle instance $\WI$, is there a winning strategy? Can a winning strategy be found efficiently? Towards answering these questions, we first give a recursive characterization of Wordle instances with winning strategies.

\begin{prop}
  \label{prop:winning}
  Let $\WI = (\Sigma, d, D, W, g)$ be Wordle instance. Given a query word $u \in D$ and target word $w \in W$, let
  \begin{equation}
    \label{eqn:consistent-set}
    C_{w}(u) = \set{v \in W \sucht \Omega_w(u) = \Omega_v(u)}
  \end{equation}
  Then $\WI$ admits a winning strategy if and only if either $\abs{W} = 1$ and $g \geq 1$, or there exists a query word $u \in D$ such that for every target word $w \in W$ the instance
  \[
  \WI' = (\Sigma, d, D, C_{w}(u), g-1)
  \]
  has a winning strategy.
\end{prop}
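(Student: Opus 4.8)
The plan is to establish the equivalence directly, without recursion, by exhibiting a correspondence between winning strategies for $\WI$ and pairs consisting of a first query $u \in D$ together with a winning strategy for each sub-instance it induces. The key structural observation I would record first is that, for a fixed query $u$, two target words $v, v' \in W$ elicit the same response, $\Omega_v(u) = \Omega_{v'}(u)$, if and only if $C_v(u) = C_{v'}(u)$; consequently the sets $\set{C_w(u) \sucht w \in W}$ are exactly the blocks of the partition of $W$ induced by $w \mapsto \Omega_w(u)$, and each block $C_w(u)$ contains $w$. Thus, after a strategy issues $u$ and observes a response $r$, the target words still consistent with the history are precisely those in $C_w(u)$ for any $w$ with $\Omega_w(u) = r$.

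For the direction ($\Leftarrow$), I would build a winning strategy from the right-hand side. When $\abs{W} = 1$ and $g \geq 1$, guessing the unique word of $W$ succeeds in round $1 \leq g$. Otherwise I fix a witnessing query $u$ and, for each response class, a winning strategy for $\WI' = (\Sigma, d, D, C_w(u), g-1)$; this choice depends on $w$ only through $\Omega_w(u)$, hence is well-defined on blocks. The composite strategy issues $u$ first and, upon response $r = \Omega_w(u)$, plays the corresponding continuation with $(u,r)$ prepended to the history. Since that continuation wins within $g-1$ rounds on $C_w(u) \ni w$, the composite wins within $1 + (g-1) = g$ rounds on $w$; as $w \in W$ was arbitrary, this is a winning strategy for $\WI$.

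For the direction ($\Rightarrow$), I would run the correspondence in reverse. Given a winning strategy for $\WI$: if $\abs{W} = 1$ then $g \geq 1$ holds automatically since $g \in \Z^+$, yielding the first clause. If $\abs{W} > 1$, let $u$ be the forced first query issued on the empty history. For each response $r$ that occurs, restricting the strategy to histories beginning with $(u, r)$ defines a strategy on the word list $C_w(u)$; because the original strategy wins within $g$ rounds and has already expended one on $u$, this restriction wins within $g-1$ rounds, witnessing a winning strategy for $\WI' = (\Sigma, d, D, C_w(u), g-1)$ for every $w \in W$. This is exactly the second clause.

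I expect the principal obstacle to be the boundary bookkeeping rather than the correspondence itself. When $g = 1$ the second clause references sub-instances of game length $g-1 = 0$; since every block $C_w(u)$ is nonempty (it contains $w$) yet no strategy can succeed using zero guesses, such sub-instances never admit a winning strategy, so the characterization correctly reduces to $\abs{W} = 1$ in that regime. I would make this rigorous by adopting the convention that a game-length-$0$ instance has a winning strategy exactly when its word list is empty, and checking this is consistent with Definition~\ref{dfn:strategy}. The remaining details---that each continuation is a genuine function of the post-$u$ history, and that the round counts compose additively---are routine once the partition structure above is in place.
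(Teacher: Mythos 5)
Your proposal is correct and follows essentially the same route as the paper's proof: the ($\Leftarrow$) direction composes the witnessing query $u$ with continuation strategies for each block $C_w(u)$, and the ($\Rightarrow$) direction restricts a winning strategy to histories following its first query. Your additional bookkeeping---the observation that the $C_w(u)$ are the blocks of the partition induced by $w \mapsto \Omega_w(u)$, and the convention for game-length-$0$ sub-instances---merely makes explicit what the paper dispatches by handling $g = 1$ separately as a base case.
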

\begin{proof}
  It is clear that $\WI$ with $g = 1$ has a winning strategy if and only if $\abs{W} = 1$, so we consider the case $g > 1$.

  First suppose there exists $u \in D$ such that for every $w \in W$, $\WI'$ as above has a winning strategy. Then a winning strategy for $\WI$ can be performed by using $u$ as its first query, and emulating the winning strategy for the resulting $\WI'$ for its remaining queries.

  Conversely, suppose $\WI$ has a winning strategy. Then each $\WI'$ as above has a winning strategy formed by simply following the winning strategy of $\WI$ after the first query (that resulted in the instance $\WI'$).
\end{proof}

\subsection{Strategy Trees}

Here, we give a more refined characterization of strategies in terms of a \emph{strategy tree}. The basic idea is as follows. Throughout a seqeunce of interactions with a Wordle oracle, the player maintains a set $C \subseteq W$ of words consistent with the query-response pairs comprising the player's interaction with the oracle. Specifically, given any interaction $(u^1, r^1, u^2, r^2, \ldots, u^k, r^k)$ with an oracle $\Omega$, we define the sequence of $C_0 = W, C_1, C_2, \ldots, C_k$ inductively by
\[
C_j = \set{w \sucht \forall i \leq j,\ \Omega_w(u^i) = r^i}.
\]
We call a set $C = C_j$ a \dft{consistent set} for the interaction with a Wordle oracle $\Omega$. Informally, a consistent set contains all possible words $w \in W$ that are consistent with $\Omega$'s responses so far.

Given a consistent set $C$ and a query $u \in D$, we say that elements $w, w' \in C$ are \dft{$u$-equivalent} and write $w \sim_{u} w'$ if $\Omega_{w}(u) = \Omega_{w'}(u)$. Clearly, $\sim_u$ is an equivalence relation. We let $C(u)$ denote the \dft{partition induced by $u$}, i.e., the partition of $C$ into equivalence classes according to $\sim_u$. Thus, $C(u)$ is a partition of $C$ into at most $3^d$ parts, corresponding to the possible responses $\Omega_w(u) \in \set{0, 1, 2}^d$ for $w \in C$. The following definition formalizes a condition under which a query $u$ provides useful information to the player.

\begin{dfn}
  \label{dfn:informative}
  Given a consistent set $C$ and query word $u \in D$, we say that $u$ is \dft{informative} if $\abs{C(u)} > 1$. That is, the induced partition $C(u)$ contains multiple parts. Otherwise, $u$ is \dft{uninformative}.
\end{dfn}

Intuitively, an informative query is a query whose response narrows down the set of consistent words. If $C_0 = W, C_1, \ldots, C_k$ is the sequence of consistent words with an interaction $u^1, r^1, \ldots, u^k, r^k$, then query $u^i$ is informative if and only if $C_i \neq C_{i-1}$.

We observe that if $\abs{C} > 1$, then every query $u \in C$ is informative. In particular, $u \in C$ is the unique word $w$ for which $\Omega_w(u) = \texttt{22}\cdots\texttt{2}$. Further, if a strategy $S$ ever makes an uninformative query, we can devise a more efficient strategy $S'$ by simply omitting the uninformative queries in $S$. Thus, for the remainder of the paper, assume without loss of generality that all strategies only make informative queries. We call such strategies \dft{informative strategies}.

\begin{dfn}
  \label{dfn:strategy-tree}
  Given an (informative) strategy $S$, we define the \dft{strategy tree} $T(S) = (V, E)$ to be the rooted tree with vertex set $V \subseteq 2^{W} \times D$ where each vertex $v = (C, u) \in V$ consists of a consistent set $C$ and (informative) query $u$ for $C$. Each edge $e \in E$ is labeled with a response $r \in \set{0,1,2}^d$. If $v = (C, u)$ has child $v' = (C', u')$ and the edge $(v, v')$ has label $r$, then $C' = \set{w \in C \sucht \Omega_w(u) = r}$. The root of $T$ is labeled with $(W, u^1)$, where $u^1$ is the first query according to $S$. Finally, the leaves of $T$ are labeled $(\set{w}, w)$.
\end{dfn}

Given a strategy $S$, the interpration of the strategy tree $T = T(S)$ is as follows. An execution of $S$ begins at the root of $T$, $(W, u^1)$. After making the first query $u^1$, the child $(C_1, u^2)$ of $(W, u^1)$ incident to the edge labeled $r^1 = \Omega(u^1)$ is selected. $C_1$ is the subset of words in $C_0$ that are consistent with the query/response $(u^1, r^1)$, and $u^2$ is the next query made according to $S$. Continuing in this way, each path from root to leaf in $T$ corresponds to a query/response sequence in an execution of $S$. Since the leaves are labeled with singleton sets, $(\set{w}, w)$, the strategy $S$ succeeds in the round after guessing $w$. With this interpretation of strategy trees, the following observations are clear.

\begin{lem}
  \label{lem:strategy-tree}
  Let $\WI = (\Sigma, d, D, W, g)$ be a Wordle instance, let $S$ be a strategy, and $T = T(S)$ the corresponding strategy tree. Then:
  \begin{enumerate}
  \item $S$ succeeds in $k$ rounds against Wordle oracle $\Omega_w$ if and only if the leaf $(\set{w}, w)$ is at depth $k - 1$ in $T$.
  \item $S$ is a winning strategy if and only if $T$ has depth $g - 1$.
  \item $S$ is an informative strategy if and only if every internal vertex in $T$ has at least $2$ children.
  \end{enumerate}
\end{lem}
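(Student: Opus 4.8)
The plan is to reduce all three claims to a single structural observation: for each fixed target word $w \in W$, the execution of $S$ against the oracle $\Omega_w$ traces out a \emph{unique} root-to-leaf path in $T$, and that path terminates at the leaf $(\set{w}, w)$. I would prove this by induction on depth. At the root $(C_0, u^1) = (W, u^1)$ we have $w \in C_0$. For the inductive step, suppose the execution has reached a vertex $(C_j, u^{j+1})$ with $w \in C_j$. The oracle returns $r^{j+1} = \Omega_w(u^{j+1})$, and by Definition~\ref{dfn:strategy-tree} exactly one outgoing edge carries this label, leading to the child $(C_{j+1}, u^{j+2})$ with $C_{j+1} = \set{w' \in C_j \sucht \Omega_{w'}(u^{j+1}) = r^{j+1}}$. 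Since $\Omega_w(u^{j+1}) = r^{j+1}$ trivially, we have $w \in C_{j+1}$, so the induction continues and each response selects a single child. Because the sets $C_j$ are nested and shrink to a singleton at each leaf, the path must end at the unique leaf whose label contains $w$, namely $(\set{w}, w)$.

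Given this correspondence, Part~1 is a matter of depth bookkeeping: the vertex at depth $j$ issues the query $u^{j+1}$, so reaching the leaf $(\set{w}, w)$ at depth $k-1$ means $u^k = w$ is the guess made in round $k$, at which point $\Omega_w(u^k) = \texttt{22}\cdots\texttt{2}$ and $S$ succeeds. Conversely, if $S$ succeeds against $\Omega_w$ in round $k$, the path determined above has its winning guess at the leaf, placing that leaf at depth $k-1$. Part~2 then follows by quantifying over all targets: $S$ is a winning strategy precisely when every $w \in W$ is guessed within $g$ rounds, i.e.\ when every leaf lies at depth at most $g-1$, which is exactly the condition that $T$ has depth (at most) $g-1$.

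For Part~3 I would use the bijection between the children of an internal vertex $(C, u)$ and the blocks of the induced partition $C(u)$: by Definition~\ref{dfn:strategy-tree} each child corresponds to a distinct response $r$ realized by some $w \in C$, and the set of realized responses is precisely the index set of $C(u)$. Hence $(C, u)$ has exactly $\abs{C(u)}$ children, so the query $u$ is informative (Definition~\ref{dfn:informative}) if and only if $(C, u)$ has at least two children. Applying this at every internal vertex yields the claim.

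The individual steps are routine, so I expect the main obstacle to be stating the base observation with enough care that the indexing is unambiguous---in particular, fixing the convention that a vertex at depth $j$ carries the $(j+1)$-st query and that the winning guess is the one made at the singleton leaf, so that ``succeeds in round $k$'' aligns exactly with ``leaf at depth $k-1$'' rather than being off by one. Once the path correspondence and this indexing are pinned down, Parts~1--3 are immediate consequences.
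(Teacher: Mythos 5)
Your proposal is correct and follows essentially the same route as the paper, which states the lemma as ``clear'' from its informal interpretation paragraph: the execution of $S$ against $\Omega_w$ traces the unique root-to-leaf path ending at $(\set{w},w)$, depths translate to round numbers with the same off-by-one convention (a vertex at depth $j$ issues the $(j+1)$-st query), and children of $(C,u)$ correspond bijectively to the blocks of the induced partition $C(u)$. You have simply made the path-correspondence induction explicit where the paper leaves it implicit, which is a faithful formalization rather than a different argument.
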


Item~2 above allows us to completely characterize winning Wordle strategies in terms of their associated strategy trees. Item~3, in turn, allows us to bound the size of strategy trees for informative strategies.

\begin{lem}
  \label{lem:leaves}
  Suppose $T = (V, E)$ is a rooted tree with $N$ leaves such that each internal vertex has at least $2$ children. Then $\abs{V} \leq 2 N - 1$.
\end{lem}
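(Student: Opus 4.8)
The plan is to prove the bound $\abs{V} \le 2N - 1$ by induction on the number of leaves $N$, which is the cleanest way to handle the ``at least $2$ children'' hypothesis. For the base case $N = 1$, the tree is a single vertex (the root, which is also a leaf), so $\abs{V} = 1 = 2 \cdot 1 - 1$, as required.

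For the inductive step, I would assume the bound holds for all such trees with fewer than $N$ leaves, and consider a tree $T$ with $N \ge 2$ leaves. Since $N \ge 2$, the root $\rho$ is internal, hence has some number $m \ge 2$ of children $c_1, \ldots, c_m$. Deleting $\rho$ splits $T$ into the $m$ subtrees $T_1, \ldots, T_m$ rooted at the $c_i$. Each $T_i$ is again a rooted tree in which every internal vertex has at least $2$ children, and if $T_i$ has $N_i$ leaves then $\sum_{i=1}^m N_i = N$ (the leaves of $T$ partition across the subtrees). Applying the inductive hypothesis to each $T_i$ gives $\abs{V(T_i)} \le 2 N_i - 1$.

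It remains to assemble the pieces: since $\abs{V} = 1 + \sum_{i=1}^m \abs{V(T_i)}$, we get
\[
\abs{V} \le 1 + \sum_{i=1}^m (2 N_i - 1) = 1 + 2N - m \le 1 + 2N - 2 = 2N - 1,
\]
where the final inequality uses $m \ge 2$. This closes the induction.

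The argument is essentially routine, so there is no serious obstacle; the only point requiring a little care is making sure the induction is well-founded. Because each $T_i$ has at most $N - 1$ leaves (as $m \ge 2$ forces each $N_i < N$), the recursion strictly decreases $N$, so strong induction on $N$ applies cleanly. An alternative would be a double-counting or edge-counting argument (relating the number of internal vertices, leaves, and edges), but the induction above is the most transparent and directly exploits the branching hypothesis, so that is the route I would take.
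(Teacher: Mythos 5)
Your proof is correct and takes essentially the same approach as the paper's: both are structural inductions that decompose the tree at a vertex into its $m \geq 2$ children's subtrees and close with the identical computation $1 + \sum_{i}(2N_i - 1) = 2N - m + 1 \leq 2N - 1$. The paper merely indexes the induction by the height $h(v)$ of a vertex (bounding the descendants of every vertex, then specializing to the root), while you use strong induction on the leaf count $N$; this difference is cosmetic.
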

\begin{proof}
  Recall that the height $h(v)$ is a vertex is defined to be
  \[
  h(v) =
  \begin{cases}
    0 &\text{if } v \text{ is a leaf}\\
    1 + \max \set{h(w) \sucht w \text{ is a child of } v} &\text{otherwise.}
  \end{cases}
  \]
  A straightforward induction argument on $h(v)$ shows that every vertex $v$ with $N_v$ descendent leaves has at most $2 N_v - 1$ descendants (including $v$ itself). The lemma follows by taking $v$ to be the root.
\end{proof}

\begin{cor}
  \label{cor:strategy-tree-size}
  Let $\WI = (\Sigma, d, D, W, g)$ be a wordle instance and $S$ an informative strategy. Then $T(S)$ has at most $2 \abs{W} - 1$ vertices. In particular, $T(S)$ can be expressed in size $\poly(\abs{\WI})$.
\end{cor}
\begin{proof}
  Let $T = T(S) = (V, E)$ be the strategy tree for $S$. First observe that for each $w \in W$, there is a unique leaf $(\set{w}, w)$ in $V$. Indeed, since the children of each vertex $(C, u)$ correspond to a partition of $C$, sets $C, C'$ corresponding to vertices $(C, u), (C', u) \in V$ are disjoint unless $(C, u)$ is an ancestor of $(C', u')$, or vice versa. Therefore, $T$ has $\abs{W}$ leaves. Further, since $S$ is an informative strategy, all internal vertices in $T$ have at least two children. Thus, by Lemma~\ref{lem:leaves}, $T$ has at most $2 \abs{W} + 1$ vertices. Since each vertex label has size $O(\abs{\WI})$, $T$ has size $O(\abs{\WI}^2)$.
\end{proof}

\begin{rem}
  \label{rem:certificate-size}
  In our calculation of the description size of $T$ above, note that each vertex of $T$ is labeled with a subset of $W$ and an element in $D$. Thus, the total size $O(\abs{W} (\abs{W} + \log \abs{D})$. In particular, this is polynomial in the size of $W$ even if $\abs{D}$ is exponential in the size of $W$.
\end{rem}

\section{A Generic Algorithm}
\label{sec:algorithm}

Here, we describe a simple procedure that given a Wordle instance $\WI = (\Sigma, d, D, W, g)$ determines if $\WI$ has a winning solution. The procedure exhaustively searches all strategies requireing at most $g$ interactions until a suitable strategy (or none) is found. While this procedure is impractical, it runs in time $O(\abs{\WI}^{O(g)})$ using space $O(g \abs{\WI})$. Thus, for games of fixed constant game length $g$, $\WI$ can be solved in polynomial time and linear space.

\subsection{Basic Algorithm}

The goal of this section is to prove the following theorem.

\begin{thm}
  \label{thm:algorithm}
  There exists an algorithm that, given a Wordle instance $\WI = (\Sigma, d, D, W, g)$, determines if $\WI$ admits a winning strategy in $O(\abs{\WI}^{O(g)})$ time using space $O(g \abs{\WI})$.
\end{thm}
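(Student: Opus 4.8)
The plan is to give a recursive search procedure based directly on the characterization in Proposition~\ref{prop:winning}, and to analyze its running time and space by charging the cost of exploring each level of recursion against the game length $g$.

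First I would define a recursive predicate $\solvable(C, h)$ that returns $\true$ if and only if the Wordle instance $(\Sigma, d, D, C, h)$ admits a winning strategy, where $C \subseteq W$ is a consistent set and $h$ is the number of remaining guesses. The base cases come straight from Proposition~\ref{prop:winning}: if $\abs{C} = 1$ and $h \geq 1$ return $\true$; if $h \leq 0$ (or $h = 1$ with $\abs{C} > 1$) return $\false$. For the recursive case, Proposition~\ref{prop:winning} tells us that $(\Sigma, d, D, C, h)$ is solvable exactly when there is some query $u \in D$ such that for \emph{every} $w \in C$ the reduced instance on the consistent set $C_w(u)$ is solvable with $h - 1$ guesses. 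Since $C_w(u)$ depends on $w$ only through the response $\Omega_w(u)$, the collection $\set{C_w(u) \sucht w \in C}$ is exactly the partition $C(u)$ of Definition~\ref{dfn:strategy-tree}. So the recursive step is: loop over all $u \in D$; for each $u$ compute the partition $C(u)$; and return $\true$ if some $u$ makes $\solvable(P, h-1)$ hold for every block $P \in C(u)$. The top-level call is $\solvable(W, g)$.

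The correctness of this procedure is then an immediate induction on $h$ using Proposition~\ref{prop:winning}, so the substance of the theorem is the resource analysis. For the time bound I would argue as follows. To evaluate $\solvable(C, h)$ we try each of the $\abs{D}$ candidate queries; for a fixed $u$, computing the partition $C(u)$ takes time $\poly(\abs{\WI})$ (evaluate $\Omega_w(u)$ for each $w \in C$ and group by response), and then we recurse on each block of $C(u)$ at depth $h - 1$. The blocks of $C(u)$ are disjoint subsets of $C$, so the total work spent inside the recursive calls for a single fixed $u$ across one level of recursion is controlled by a factor of $\abs{D}$ per level times the polynomial partitioning cost. Unwinding $g$ levels of recursion, the branching over queries contributes a factor $\abs{D}^{O(g)}$ and each node does $\poly(\abs{\WI})$ work, giving a total running time of $\abs{\WI}^{O(g)}$. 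For the space bound, the key observation is that the recursion depth is at most $g$, and at each active level of the recursion we need only store the current consistent set $C$ (of size $O(\abs{\WI})$) and a pointer or index recording which query $u$ and which block we are currently examining; we do \emph{not} store the whole strategy tree. Hence the total space is $O(g \cdot \abs{\WI})$.

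The main obstacle I anticipate is the time analysis, specifically making the $\abs{\WI}^{O(g)}$ bound precise rather than hand-waving the recursion. The subtlety is that a naive bound would multiply $\abs{D}$ (the number of queries tried) by the number of blocks (up to $3^d$) at each of $g$ levels, and one must check this does not blow up beyond $\abs{\WI}^{O(g)}$. The clean way to handle this is to write a recurrence $T(C, h) \leq \abs{D} \cdot \paren{\poly(\abs{\WI}) + \sum_{P \in C(u)} T(P, h-1)}$ and observe that because the blocks partition $C$, the summed sizes telescope, so each of the $g$ levels contributes only a single multiplicative factor of $\abs{D}$ together with a polynomial overhead; the resulting closed form is $\abs{D}^{g} \cdot \poly(\abs{\WI}) \cdot \poly(\abs{\WI}) = \abs{\WI}^{O(g)}$. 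The space analysis is comparatively routine once the procedure is framed as depth-first recursion that discards each subproblem's computation before moving to the next, but it is worth stating explicitly that we avoid materializing the exponentially-many strategies by searching rather than enumerating.
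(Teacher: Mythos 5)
Your proposal matches the paper's proof in all essentials: the paper defines the same recursive predicate $\solvable(C, D, \ell)$ implementing Proposition~\ref{prop:winning} with the same base cases, proves correctness by the same induction on $\ell$ (Lemma~\ref{lem:solvable-correctness}), and obtains the $\abs{\WI}^{O(g)}$ time and $O(g\abs{\WI})$ space bounds from the depth-$g$ depth-first recursion with per-call $\filter$ cost (Lemma~\ref{lem:filter}). The only cosmetic difference is that you recurse once per block of the partition $C(u)$ while the paper recurses once per target word $w \in C$ (recomputing shared blocks), which changes the running time by at most a polynomial factor and is absorbed into the stated bound.
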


%% \begin{cor}
%%   \label{cor:pspace}
%%   The problem of finding a winning strategy for Wordle is in PSPACE.
%% \end{cor}

Towards proving Theorem~\ref{thm:algorithm}, we first describe a simple subroutine, $\filter(C, u, \Omega)$, that given a set of words $C$, a query $u$, and a Wordle oracle $\Omega$, returns the subset $A \subseteq C$ of words that are consistent with $\Omega(u)$.

\begin{algorithm}[H]
  \caption{$\filter(C, u, \Omega)$. Given a set $C$ of words, a query word $u$, and a Worldle oracle $\Omega$, return the subset of words in $C$ that are consistent with $\Omega(u)$.
    \label{alg:filter}}
  \begin{algorithmic}[1]
    \STATE $A \leftarrow \varnothing$
    \STATE $r \leftarrow \Omega(u)$
    \FORALL{$v \in C$}
    \IF{$\Omega_v(u) = r$}
    \STATE $A \leftarrow A \cup \set{v}$
    \ENDIF
    \ENDFOR
    \RETURN $A$
  \end{algorithmic}
\end{algorithm}

We state the main properties of Algorithm~\ref{alg:filter} in the lemma below.

\begin{lem}
  \label{lem:filter}
  Algorithm~\ref{alg:filter} uses 1 query to $\Omega$. Assuming $C \subseteq W$ and $A$ are represented as lists of words in $\Sigma^d$, the algorithm can be implemented in time $O(\abs{\WI} \log d)$ time and $O(\abs{\WI})$ space.
\end{lem}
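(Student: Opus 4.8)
The plan is to verify the two assertions separately: first the query count, then the time and space bounds.

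For the query count, I would simply trace the calls to the unknown oracle $\Omega$. The oracle $\Omega$ is invoked exactly once, on line~2, when we set $r \leftarrow \Omega(u)$. The evaluations $\Omega_v(u)$ inside the loop (line~4) are \emph{not} queries to $\Omega$: here $v$ is a known candidate drawn from the list $C \subseteq W$, so we can compute $\Omega_v(u)$ ourselves directly from the closed-form expression~(\ref{eqn:wordle-oracle}) of Definition~\ref{dfn:wordle-oracle}. Hence the algorithm makes exactly one query to $\Omega$.

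For the time bound, the loop runs $\abs{C} \leq \abs{W}$ times, so it suffices to show that each iteration---evaluating $\Omega_v(u)$ and comparing it to $r$---can be carried out in $O(d \log d)$ time. The naive reading of~(\ref{eqn:wordle-oracle}) spends $\Theta(d)$ work per position $i$, since the counts range over all indices $j$, giving a $\Theta(d^2)$ evaluation; the crux is to avoid this. I would proceed in three passes: (i) mark the green positions in a single sweep, setting $r_i = 2$ whenever $v_i = u_i$; (ii) precompute, for each letter $x$, the quantity $\abs{\set{j \sucht v_j = x,\ v_j \neq u_j}}$ (the number of non-green occurrences of $x$ in $v$) by one pass over $v$, storing the counts in a dictionary keyed by letter; and (iii) sweep $u$ from left to right, maintaining for each letter a running count of its non-green occurrences seen so far, so that at each non-green position $i$ the test in~(\ref{eqn:wordle-oracle}) collapses to a single comparison of the running count of $u_i$ against its precomputed total. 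Each pass touches every position a constant number of times and performs $O(1)$ dictionary operations per position. Since a word of length $d$ has at most $d$ distinct letters, each dictionary operation costs $O(\log d)$ in a comparison-based model, yielding $O(d \log d)$ per word and $O(\abs{W} d \log d) = O(\abs{\WI} \log d)$ overall (using that $W \subseteq D \subseteq \Sigma^d$ forces $\abs{W} d = O(\abs{\WI})$).

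For the space bound, the input list $C$ and output list $A$ have total length $O(\abs{W} d) = O(\abs{\WI})$, the response $r$ has size $O(d)$, and the auxiliary per-letter dictionaries used in steps (ii)--(iii) require only $O(d)$ space and can be reused across iterations of the loop. Hence the working space is $O(\abs{\WI})$. I expect the main obstacle to be steps (ii)--(iii): turning the per-position, quadratic-looking count in~(\ref{eqn:wordle-oracle}) into the amortized left-to-right sweep that correctly reproduces the greedy assignment of $1$-responses, and accounting honestly for the $\log d$ factor as the cost of the letter-indexed dictionary operations. The remaining parts are routine bookkeeping.
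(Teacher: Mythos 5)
Your proof is correct and follows essentially the same route as the paper's: the oracle $\Omega$ is invoked only on line~2, and each $\Omega_v(u)$ is computed directly from Equation~(\ref{eqn:wordle-oracle}), with the green positions found in one sweep and the \texttt{1}-entries in $O(d \log d)$ time, iterated $\abs{C}$ times. The only difference is in implementation detail: the paper obtains the per-word bound by sorting the non-\texttt{2} entries of $u$ and $v$, whereas you use a left-to-right counting sweep with letter-indexed dictionaries---both correctly realize the greedy assignment of \texttt{1}-responses and yield the stated $O(\abs{\WI} \log d)$ time and $O(\abs{\WI})$ space bounds.
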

\begin{proof}
  The first assertion of the lemma is clear, as $\Omega$ is only invoked in Line~2. We observe that computing $\Omega_v(u)$ can be computed in time $O(d \log (d + \abs{\Sigma})$. The \texttt{2} entries of $\Omega_v(u)$ can be found in time $O(d \log \abs{\Sigma})$ by simply compring $v_i$ and $u_i$ for $i = 1, 2, \ldots, d$. The $\texttt{1}$-entries, can be found in $O(d \log d \log \abs{\Sigma}$ time be, e.g., by sorting the non-$2$ entries of $u$ and $v$ in order to apply the second case of Equation~(\ref{eqn:wordle-oracle}). Finally, this process is iterated $\abs{C}$ times to form $A$, giving a total running time of $O(\abs{C} d \log d \log \abs{\Sigma}) = O(\abs{\WI})$.
\end{proof}

We now present a recursive method that starting from a given state---i.e., a collection $C$ of target words in $W$ that are consistent with previous queries---determines if the game can be won with $\ell$ remaining queries. The basic idea is to leverage the recursive description of Wordle instances with winning strategies described in Proposition~\ref{prop:winning}. By the proposition, in order to determine if an instance has a winning strategy starting from state $C$ in $\ell$ steps, it suffices to determine if there exists a query $u$ such that for each possible response and its corresponding state $C' \subseteq C$, the instance has a winning strategy of length $\ell - 1$ starting from $C'$. The base case occurs when either $\abs{C} = 1$ and $\ell \geq 1$ or $\ell = 1$.

Algorithm~\ref{alg:solvable} gives pseudocode for a method, $\solvable$, that implements the recursive procedure described above. For $\ell, \abs{C} > 1$, the method iterates over choices of queries $u \in D$ (lines \ref{ln:query-start}--\ref{ln:query-end}). For each query, the method iterates over all consistent target words $w \in C$ (lines \ref{ln:target-start}--\ref{ln:target-end}) and determines the set $A \subseteq C$ of words consistent with the result $\Omega_w(u)$. A recursive call to $\solvable(A, D, \ell-1)$ determines if the resulting state admits a winning strategy in $\ell - 1$ rounds. If a $u$ is found such that all resulting states admit winning strategies, then the value $\true$ is returned (line~\ref{ln:success}). Otherwise, if no such $u$ is found, the value $\false$ is returned.

\begin{algorithm}
  \caption{$\solvable(C, D, \ell)$
    \label{alg:solvable}}
    \begin{multicols}{2}
  \begin{algorithmic}[1]
    \IF{$\ell = 1$ and $\abs{C} > 1$}\label{ln:base-start}
    \RETURN $\mathbf{false}$
    \ENDIF
    \IF{$\abs{C} = 1$}
    \RETURN $\mathbf{true}$
    \ENDIF\label{ln:base-end}
    \FORALL{$u \in D$}\label{ln:query-start}
    \STATE $\sol \leftarrow \true$
    \FORALL{$w \in C$}\label{ln:target-start}
    \STATE $A \leftarrow \filter(C, u, w)$
    \IF{$\neg\solvable(A, D, \ell - 1)$}
    \STATE $\sol \leftarrow \false$
    \STATE $\mathbf{break}$
    \ELSE
    \STATE $\sol \leftarrow \true$\label{ln:success}
    \ENDIF
    \ENDFOR\label{ln:target-end}
    \IF{$\sol$}
    \RETURN $\true$
    \ENDIF
    \ENDFOR\label{ln:query-end}
    \RETURN $\false$
  \end{algorithmic}
  \end{multicols}
\end{algorithm}

\begin{lem}
  \label{lem:solvable-correctness}
  Let $\WI = (\Sigma, d, D, W, g)$ be a Wordle instance, $C \subseteq W$ a set of consistent words, and $\ell \leq g$. Then $\solvable(C, D, \ell)$ returns $\true$ if and only if $\WI' = (\Sigma, d, D, C, \ell)$ admits a winning strategy. In particular, $\WI$ admits a winning strategy if and only if $\solvable(W, D, g)$ returns $\true$.
\end{lem}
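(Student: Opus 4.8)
The plan is to prove the equivalence by induction on the number of remaining guesses $\ell$, using Proposition~\ref{prop:winning} as the bridge between the recursive structure of $\solvable$ and the existence of a winning strategy. The inductive hypothesis would assert that for every $\ell' < \ell$ and every $C' \subseteq W$, the call $\solvable(C', D, \ell')$ returns $\true$ exactly when $(\Sigma, d, D, C', \ell')$ admits a winning strategy. Since each recursive subproblem operates on a set $A \subseteq C \subseteq W$ with the same dictionary $D$ and a strictly smaller budget $\ell - 1$, the induction is well-founded.

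First I would dispatch the base cases, which correspond to lines~\ref{ln:base-start}--\ref{ln:base-end}. If $\abs{C} = 1$ then $\solvable$ returns $\true$, matching the first disjunct of Proposition~\ref{prop:winning}: a single-word instance is winnable whenever $\ell \geq 1$. If $\ell = 1$ and $\abs{C} > 1$ then $\solvable$ returns $\false$; and indeed a one-guess instance with more than one candidate word cannot be won, since a single query cannot isolate every target. These two cases exhaust the situations in which the main loop is not entered.

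For the inductive step ($\ell > 1$ and $\abs{C} > 1$), the key observation is that the call $\filter(C, u, w)$ computes precisely the consistent set $C_w(u) = \set{v \in C \sucht \Omega_v(u) = \Omega_w(u)}$, which is exactly~(\ref{eqn:consistent-set}) with the word list $C$ in place of $W$ (the relevant word list for the subinstance). I would then argue that the control flow of lines~\ref{ln:query-start}--\ref{ln:query-end} realizes the quantifier structure ``there exists $u \in D$ such that for every $w \in C$\ldots'' from Proposition~\ref{prop:winning}: for a fixed $u$, the flag $\sol$ remains $\true$ after the inner loop if and only if every recursive call $\solvable(\filter(C,u,w), D, \ell-1)$ returned $\true$, since the $\mathbf{break}$ forces an early exit with $\sol = \false$ on the first failure; and the procedure returns $\true$ overall precisely when some $u$ passes this test. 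Invoking the inductive hypothesis on each $\solvable(C_w(u), D, \ell-1)$ and then applying Proposition~\ref{prop:winning} closes the step. The ``in particular'' statement follows immediately by specializing to $C = W$ and $\ell = g$.

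The main obstacle I anticipate is bookkeeping rather than mathematical depth: I must confirm that the imperative control flow---particularly the reuse of the $\sol$ flag across iterations of the $w$-loop together with the $\mathbf{break}$---faithfully encodes the universal quantifier over $w \in C$ nested inside the existential quantifier over $u \in D$. In particular I would check that $\sol$ is reset to $\true$ at the start of each outer iteration and is flipped to $\false$ only upon a genuine failure, so that returning $\true$ truly certifies a single query $u$ under which every induced subinstance is winnable.
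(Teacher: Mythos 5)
Your proof is correct and follows essentially the same route as the paper's: induction on $\ell$, with the base cases handled by lines~\ref{ln:base-start}--\ref{ln:base-end} and the inductive step identifying the control flow of lines~\ref{ln:query-start}--\ref{ln:query-end} with the ``exists $u \in D$, for all $w \in C$'' quantifier structure of Proposition~\ref{prop:winning}. Your added care about the $\sol$ flag and the observation that $\filter(C,u,w)$ computes the consistent set relative to $C$ (the word list of the subinstance $\WI'$) are details the paper's proof states more tersely, but there is no substantive difference.
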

\begin{proof}
  We argue by induction on $\ell$. The base case $\ell = 1$ follows from the first case of Proposition~\ref{prop:winning} and lines~\ref{ln:base-start}--\ref{ln:base-end}. For the inductive step, suppose the lemma holds for $\ell - 1$. A call to $\solvable(C, D, \ell)$ returns $\mathbf{true}$ at line~\ref{ln:success} iff there exists $u \in D$ such that for all $w \in C$, $\solvable(A, D, \ell-1)$ returns $\mathrm{true}$, where $A = \set{v \in C \sucht \Omega_v(u) = \Omega_w(u)}$. By the inductive hypothesis, this occurs iff all instances are solvable from state $A$ within $\ell - 1$ rounds. Finally, by Proposition~\ref{prop:winning}, this occurs iff $\WI'$ is solvable, as desired.
\end{proof}

%% \subsection{Modifications and Extensions}

%% - every game has a solution of length $\abs{W}$

%% - finding optimal solution length is in PSPACE

%% - given $g$ and a distribution over $W$, we can find a strategy that maximizes the probability of success?

\section{NP-Completeness }
\label{sec:hardness}

In this section, we prove the following theorem.

\begin{thm}
  \label{thm:hardness}
  Determining if a Wordle instance has a winning strategy is NP-complete.
\end{thm}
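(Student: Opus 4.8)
The plan is to prove the two halves of NP-completeness separately: membership in NP via the strategy-tree certificate developed in Section~\ref{sec:preliminaries}, and NP-hardness via a polynomial-time reduction from the minimum dominating set (MDS) problem.

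For membership in NP, I would take the certificate for a positive instance to be a purported winning strategy tree $T$ in the sense of Definition~\ref{dfn:strategy-tree}. By Corollary~\ref{cor:strategy-tree-size}, any informative winning strategy has a tree with at most $2\abs{W} - 1$ vertices, each labeled by a pair $(C, u) \in 2^W \times D$, so $T$ has size $\poly(\abs{\WI})$ and can be guessed. The verifier then checks, in polynomial time, that $T$ is a valid winning strategy tree: the root is labeled $(W, u^1)$ for some $u^1 \in D$; for each internal vertex $(C, u)$, its children are exactly the nonempty classes of the partition $C(u)$ induced by $u$, with the correct response labels (each class computable with the $\filter$ routine of Lemma~\ref{lem:filter}); each leaf is labeled $(\set{w}, w)$; and every $w \in W$ appears in exactly one leaf. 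Finally the verifier confirms that $T$ has depth at most $g - 1$. By Lemma~\ref{lem:strategy-tree}, these conditions hold if and only if $T$ encodes a winning strategy, so the certificate is both sound and complete, placing the problem in NP. As noted in Remark~\ref{rem:certificate-size}, this argument survives even when $\abs{D}$ is exponential, since the tree records only the $O(\abs{W})$ queries actually used.

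For NP-hardness, I would reduce from MDS: given a graph $G = (V, E)$ and a bound $k$, construct a Wordle instance $\WI$ that admits a winning strategy if and only if $G$ has a dominating set of size at most $k$. The intended correspondence is that each vertex $v \in V$ indexes both a possible target word $w_v \in W$ and an allowable query word $u_v \in D$, with the alphabet, dimension, and words chosen so that the response $\Omega_{w_{v'}}(u_v)$ reveals the identity of the target precisely when $v$ dominates $v'$ (i.e., $v = v'$ or $vv' \in E$), and is otherwise a fixed uninformative response shared by all non-dominated targets. Setting the game length to $g = k + 1$, a sequence of queries $u_{v_1}, \ldots, u_{v_k}$ then narrows every target to a singleton within the allotted guesses exactly when $\set{v_1, \ldots, v_k}$ is a dominating set of $G$.

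The main obstacle is realizing the desired ``identify-if-dominated, ignore-otherwise'' behavior within the rigid semantics of the Wordle oracle of Definition~\ref{dfn:wordle-oracle}, and arguing that adaptivity confers no advantage, so that a winning strategy really does collapse to a choice of a bounded-size dominating set. I would address the first point by encoding each vertex in a dedicated block of coordinates, so that querying $u_v$ produces a distinctive pattern (for instance, a $\texttt{2}$ in $v$'s block) only against targets dominated by $v$ and a uniform pattern against all others; this forces any informative query to be of the form $u_v$ and its response either to pin down the target or to leave the consistent set unchanged apart from removing the dominated words. Because each informative query can eliminate only the targets it dominates, the number of queries needed to reach singletons is governed by a covering requirement identical to domination, which yields the equivalence and completes the reduction. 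Care must also be taken that $\abs{\Sigma}$, $d$, $\abs{D}$, and $\abs{W}$ remain polynomial in the size of $G$, which the block encoding ensures.
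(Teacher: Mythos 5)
Your NP-membership argument is essentially the paper's own---guess the strategy tree, invoke Corollary~\ref{cor:strategy-tree-size} for its $\poly(\abs{\WI})$ size, verify it edge-by-edge with $\filter$ and check depth at most $g-1$ via Lemma~\ref{lem:strategy-tree}---and that half is correct. The hardness half, however, has a genuine gap: the claimed equivalence ``$\WI$ with $g = k+1$ has a winning strategy iff $G$ has a dominating set of size $k$'' fails in the backward direction, because the final guess in Wordle simultaneously probes and wins. After $k$ all-zero responses the player may face a single surviving candidate and simply guess it, so a winning strategy of length $k+1$ only yields $k$ queries whose closed neighborhoods cover all but at most \emph{one} vertex---equivalently, a dominating set of size $k+1$, not $k$. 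Concretely, take $G$ to be two isolated vertices and $k = 1$: there is no dominating set of size $1$, yet the instance with $g = 2$ is winnable (query one vertex; if the response is not all \texttt{2}s, guess the other). The paper closes exactly this off-by-one hole with Fact~\ref{fact:mds}: by reducing from graphs of the form $G' \sqcup G'$, promised to have an even-cardinality minimum dominating set, with $K$ even, a dominating set of size $K+1$ extracted from a winning strategy still certifies an MDS of size at most $K$. Your proposal contains no mechanism to absorb this slack, so the reduction as you state it does not go through.

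A second problem is that your instantiation does not meet your own specification. You correctly require that $\Omega_{w_{v'}}(u_v)$ \emph{reveal the identity} of the target whenever $v$ dominates $v'$, but the realization you sketch---a \texttt{2} in $v$'s dedicated block against every dominated target---returns the \emph{same} response for every neighbor of $v$, telling the player only that the target lies in $N(v)$; with no guesses to spare, this breaks the $k+1$ bound whenever $v$ has two or more non-yet-eliminated neighbors. The paper's fix is one coordinate per \emph{edge}: $d = m$, and the two endpoint words of $e_j$ share the symbol $e_j$ exactly at position $j$, so the response $2\mathbf{e}_j$ names the edge and hence, given the query, the target. This choice also quietly sidesteps the \texttt{1}-responses of Definition~\ref{dfn:wordle-oracle}, since distinct words share symbols only at identical positions, so every oracle answer is a $\set{0,2}$-vector; your block encoding would need an analogous check. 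Finally, your unargued claim that ``adaptivity confers no advantage'' should be replaced by the standard device the paper uses: restrict attention to the all-zero response branch, along which any adaptive strategy degenerates into a fixed sequence of queries $u_1, \ldots, u_{K+1}$ from which the (almost-)dominating set is read off.
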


The proof of NP-hardness follows from a reduction from the minimum dominating set problem (MDS). NP-completeness follows because every (winning, informative) strategy admits a description---its strategy tree---whose size is polynomial in in its size and whose validity can be verified in polynomial time.

Recall that given a graph $G = (V, E)$, a \dft{dominating set} is a subset of vertices $U \subseteq V$ such that every vertex is adjacent to a vertex in $U$. For a given parameter $K$, MDS asks whether $G$ has a dominating set of size (at most) $K$. MDS was shown to be NP-complete in~\cite{Garey1979-computers}. Towards proving Theorem~\ref{thm:hardness}, we will require a slight modification of Garey and Johnson's result.

\begin{fact}
  \label{fact:mds}
  Let $G$ be a graph that is promised to have a minimum dominating set of even cardinality, and let $K$ be an even number. Then it is NP-hard to determine if $G$ has a dominating set of size at most $K$ or if $G$'s minimum dominating set has size at least $K + 2$.
\end{fact}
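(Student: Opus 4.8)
The plan is to reduce from the standard (non-promise) minimum dominating set problem, which asks, given $(G,K)$, whether $G$ has a dominating set of size at most $K$, and which is NP-complete by~\cite{Garey1979-computers}. The goal is to transform an arbitrary such instance into one satisfying both the parity promise and the size-$2$ gap in the Fact. Write $\gamma(H)$ for the size of a minimum dominating set of a graph $H$.

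The key construction is to take $G'$ to be the disjoint union of \emph{two} vertex-disjoint copies of $G$, and to set $K' = 2K$. The central fact I would invoke is that the domination number is additive over disjoint unions: if $H = H_1 \sqcup H_2$, then a subset of $V(H)$ dominates $H$ if and only if its restriction to each $H_i$ dominates $H_i$, since no vertex of one component is adjacent to any vertex of the other. Hence $\gamma(H) = \gamma(H_1) + \gamma(H_2)$, and in particular $\gamma(G') = 2\,\gamma(G)$.

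From this single identity everything falls out at once. First, $\gamma(G') = 2\gamma(G)$ is manifestly even, so $G'$ satisfies the promise that its minimum dominating set has even cardinality, and $K' = 2K$ is even as required. For the yes-instances, $\gamma(G) \le K$ gives $\gamma(G') = 2\gamma(G) \le 2K = K'$, so $G'$ has a dominating set of size at most $K'$. For the no-instances, $\gamma(G) \ge K+1$ gives $\gamma(G') = 2\gamma(G) \ge 2(K+1) = K' + 2$, so $G'$'s minimum dominating set has size at least $K' + 2$. Thus yes-instances map to yes-instances and no-instances map to no-instances of the promise problem, and the transformation is clearly computable in polynomial time, completing the reduction.

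The step I expect to carry the most weight is not technical but conceptual: noticing that duplicating $G$ simultaneously forces evenness of $\gamma(G')$ and \emph{amplifies} the trivial gap of $1$ between the yes- and no-cases of ordinary MDS into the required gap of $2$. Once the doubling is in place, the only thing to verify carefully is the additivity of $\gamma$ over disjoint unions, which is immediate from the observation that inter-component adjacencies are absent; I would state this as a one-line lemma and then read off the four bullet conditions (parity of $\gamma(G')$, parity of $K'$, soundness, and completeness) directly.
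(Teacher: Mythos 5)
Your proposal is correct and follows essentially the same route as the paper: both arguments take the disjoint union of two copies of the input graph, double the budget $K$, and use the absence of cross-component edges (equivalently, additivity of the domination number over disjoint unions) to obtain the even-cardinality promise and amplify the gap from $1$ to $2$. Your write-up simply makes explicit the details the paper leaves as ``follows immediately.''
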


Fact~\ref{fact:mds} follows immediately from Garey and Johnson's hardness result by considering graphs $G$ of the form $G = G' \sqcup G'$. That is $G$ is a disjoint union of two copies of some graph $G'$. Since there are no edges between to the disjoint copies of $G'$, $G$ has a dominating set of size $K = 2 K'$ if and only if $G'$ has a dominating set of size $K'$.

\begin{lem}
  \label{lem:hardness}
  The problem of determining if a Wordle instance has a winning strategy is NP-hard.
\end{lem}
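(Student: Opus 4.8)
The plan is to reduce from the minimum dominating set problem, using the version with a factor-$2$ gap given in Fact~\ref{fact:mds}. Given a graph $G = (V, E)$ with $V = \{v_1, \ldots, v_n\}$ and a target size $K$, I would construct a Wordle instance $\WI = (\Sigma, d, D, W, g)$ in which a winning strategy of length $g$ exists if and only if $G$ has a dominating set of size at most $K$. The core idea is to make each vertex $v_i$ of $G$ correspond to a candidate guess word $u_i \in D$, and to set up the target word list $W$ so that the response to querying $u_i$ reveals exactly which vertices are dominated by $v_i$. Concretely, I would aim for an encoding where the consistent set after querying $u_i$ partitions $W$ according to the closed neighborhood $N[v_i]$, so that a single query $u_i$ ``narrows down'' the targets precisely to those corresponding to vertices adjacent to $v_i$ (or equal to $v_i$). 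Choosing a set of $K$ queries that collectively distinguish all targets then corresponds exactly to choosing $K$ vertices whose closed neighborhoods cover $V$, i.e., a dominating set.

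The key steps, in order, are as follows. First, fix the alphabet and dimension: I would take $d = n$ (one coordinate per vertex) over a small alphabet, say $\Sigma = \{0, 1, 2\}$ or similar, and encode each target word so that querying $u_i$ yields a ``$2$'' (or other distinguished symbol) in coordinates corresponding to vertices in $N[v_i]$ and a different symbol elsewhere. Second, I would design $W$ to contain one target word $w_j$ per vertex $v_j$, plus possibly auxiliary ``gadget'' words, chosen so that two distinct targets $w_j, w_{j'}$ give the same response to $u_i$ precisely when both $v_j, v_{j'}$ lie inside (or both outside) $N[v_i]$; this makes the query $u_i$ informative in the intended way. Third, I would set the game length $g$ in terms of $K$ so that a winning strategy corresponds to using at most $K$ ``dominating'' queries before the consistent set becomes a singleton; using Proposition~\ref{prop:winning}, a winning strategy of the allotted length exists iff some choice of $\le K$ queries separates every target. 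Fourth, I would verify both directions of the reduction: a dominating set of size $\le K$ yields a winning strategy (guess the corresponding words, and the gap in Fact~\ref{fact:mds} guarantees the win happens within $g$ rounds), and conversely a winning strategy of length $g$ extracts a dominating set of size $\le K$ from the queries it makes along the root-to-leaf paths of its strategy tree. Finally, I would confirm the reduction is polynomial-time computable in $n$, $|E|$, and $K$.

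The main obstacle I anticipate is the gadget design that forces the correspondence ``winning strategy $\Leftrightarrow$ dominating set'' to be tight in \emph{both} directions simultaneously. The easy direction (dominating set $\Rightarrow$ winning strategy) is typically straightforward: just query the words for the dominating vertices. The hard direction is showing that \emph{any} winning strategy can be converted into a dominating set of the right size---the subtlety being that Wordle queries are \emph{adaptive} and need not be restricted to the ``vertex words'' $u_i$. I must ensure that no cleverly chosen query word outside the intended set, and no use of adaptivity, lets the player win with fewer effective ``covering'' guesses than a dominating set requires. This is exactly where the factor-$2$ promise gap from Fact~\ref{fact:mds} should earn its keep: by building the instance so that any winning strategy must spend one distinct query per ``uncovered region,'' the gap between dominating sets of size $K$ and size $K+2$ should translate into a robust gap in the number of required guesses, ruling out degenerate shortcuts. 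Getting the oracle's response behavior on repeated letters and on non-vertex queries to cooperate with this counting argument is the delicate part; the bulk of the work will be choosing $\Sigma$, the word encoding, and $g$ so that the adaptive lower bound goes through cleanly.
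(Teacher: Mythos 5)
You have the right source problem (MDS via Fact~\ref{fact:mds}), but the gadget you sketch does not make the equivalence work, and the direction you flag as hard is not where the actual difficulty lies. The defect is in your second step: you ask that two targets $w_j, w_{j'}$ receive the same response to the query $u_i$ precisely when both lie inside or both outside $N[v_i]$. With that coarse binary partition, a positive response only tells the player ``the target is somewhere in $N[v_i]$,'' and the \emph{easy} direction already fails: a dominating set of size $K$ does not yield a winning strategy, since after a hit the consistent set is all of $N[v_i]$, which can be arbitrarily large, and no budget of the form $g = K + O(1)$ lets the player finish. (Consider a star: one vertex dominates everything, yet under your encoding querying the center is uninformative, as every target lies in its closed neighborhood.) More generally, your encoding makes the number of queries track a separation/test-cover condition---every pair of targets must be split by some queried neighborhood---which is a genuinely different quantity from domination. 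The paper's construction avoids this by making responses \emph{finer} than your binary partition: coordinates are indexed by edges ($d = m$), the word for vertex $i$ carries the symbol $e_j$ in coordinate $j$ exactly when $i$ is incident to $e_j$, and a private symbol $i$ elsewhere, so that $\Omega_j(i)$ is all zeros when $(i,j) \notin E$, all twos when $i = j$, and $2\mathbf{e}_k$ when $e_k = (i,j) \in E$. The crucial point is that the response $2\mathbf{e}_k$ names the edge and hence \emph{uniquely identifies the target}, so one additional guess finishes the game; this is exactly what makes $g = K+1$ correspond to domination rather than separation.

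On the converse direction, your anticipated obstacles are misplaced and the needed idea is absent. Queries outside the intended vertex words are a non-issue: the reduction chooses the instance, and setting $D = W$ eliminates them outright. Adaptivity is handled not by the gap in Fact~\ref{fact:mds} but by a short path argument: follow the single branch of the strategy along which every response is all zeros; the queries $u_1, u_2, \ldots$ on that branch must dominate $V$ within $K+1$ steps, since any vertex not dominated by them stays consistent with every response seen, and the strategy would lose against it. This extracts a dominating set of size at most $K+1$ from any winning strategy of length $K+1$. The even-cardinality promise of Fact~\ref{fact:mds} is then spent precisely on absorbing the resulting off-by-one slack ($K$ versus $K+1$) between the two directions---not, as you propose, on ruling out adaptive shortcuts. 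Without the edge-indexed response structure and the all-zero-path extraction, your outline does not assemble into a proof.
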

\begin{proof}
  Let $G = (V, E)$ be a graph with $V = [n] = \set{1, 2, \ldots, n}$ and $E = \set{e_1, e_2, \ldots, e_m}$, and fix any even value of $K < n$. Given $G$, we will construct a Wordle instance $\WI = (\Sigma, d, D, W, g)$ such that $\WI$ has a winning strategy of length $K + 1$ if and only if $G$ has a dominating of size at most $K + 1$. If $G$ is promised to have an even-cardinality MDS (as in Fact~\ref{fact:mds}), then determining if $\WI$ has a winning strategy of length $K+1$ certifies that $G$ has an MDS of size at most $K$.

  The idea of our reduction is that words in $D = W$ correspond to vertices in $G$, and that given $w, u \in W$, $\Omega_w(u)$ gives a ``positive'' (non-zero) answer if and only if $w$ and $u$ correspond to neighbors in $G$. More formally, we set:
  \begin{itemize}
  \item $\Sigma = V \cup E$
  \item $d = m = \abs{E}$
  \item $D = W = \set{w_1, w_2, \ldots, w_n} \subseteq \Sigma^m$ where $w_i$ is defined as
    \begin{equation}
      w_{ij} =
      \begin{cases}
        e_j &\text{if } i \text{ is incident with } e_j\\
        i &\text{otherwise}.
      \end{cases}
    \end{equation}
  \item $g = K + 1$.
  \end{itemize}
  Since each $w_i$ corresponds to vertex $i$ in $G$, we will refer to these elements interchangeably as words and vertices. Clearly, $\WI(G, K)$ as defined above can be computed in polynomial time. The crux of our argument is the following claim about the vector returned by a Wordle oracle.
  \begin{description}
  \item[Claim.] For each index $i$, let $\mathbf{e}_i$ denote the $i\th$ standard unit vector---i.e., $\mathbf{e}_{ij}$ is $1$ for $i = j$ and $0$ otherwise. Then
    \begin{equation}
      \label{eqn:oracle}
      \Omega_{j}(i) =
      \begin{cases}
        0 = \texttt{00}\cdots\texttt{0} &\text{if } (i, j) \notin E\\
        \texttt{22}\cdots\texttt{2} &\text{if } i = j\\
        2 \textbf{e}_k &\text{if } e_k = (i, j) \in E.
      \end{cases}
    \end{equation}
  \item[Proof of Claim.] If $(i, j) \notin E$, then $w_i$ and $w_j$ do not share any symbols in common because all symbols in $w_i$ are either $i$ or $(i, j') \in E$. Thus, $\Omega_{j}(i)$ returns all \texttt{0}s. For the third case, suppose $(i, j) = e_k \in E$. Then $w_{ik} = w_{jk} = e_k$, while $w_i$ and $w_j$ differ on all other characters, as claimed.
  \end{description}
  Returning to the main proof, we first argue the $\Leftarrow$ direction. To this end assume that $G$ has a dominating set $U = \set{u_1, u_2, \ldots, u_K}$ of size $K$, and fix any $j \in W$. Since $U$ is a dominating set, either $j \in U$ or there is some $u_i \in U$ such that $(u_i, j) \in E$. Consider the strategy that queries $u_1, u_2, \ldots$ until the first query $u_i$ returns a nontivial (i.e., nonzero) answer. If $u_i = j$, then the strategy succeeds. Otherwise, if $(u_i, j) \in E$, then $\Omega_{j}(i)$ returns $2 \textbf{e}_k$ where $e_k = (u_i, j)$. The player then picks $w_j$ as their next choice, thus winning in at most $\abs{U} + 1 = K + 1$ rounds.

  For the $\Rightarrow$ direction, suppose $\WI(G, K)$ has a winning strategy $S$ of length $K+1$. 
  %$G$ has a minimum dominating set of size at least $K + 2$. Fix any (deterministic) strategy $S$, and
  Let $u_1$ be the first query made by $S$. Define $W_1 = \set{j \in [n] \sucht \Omega_j(u_1) = 0}$. By the previous claim, $W_1$ consists of all non-neighbors of $u_1$. Inductively define $u_i$ to be the $i\th$ query assuming all previous queries returned $0$, and define $W_i = \set{j \in W_{i-1} \sucht \Omega_{j}(u_i) = 0}$. Again, by the claim, $W_i$ consists of all vertices that are not neighbors of any vertex in $U_i = \set{u_1, u_2, \ldots, u_i}$. Since $S$ is a winning strategy, there is some $k \leq K+1$ for which $W_k = \varnothing$, for otherwise, choosing $w \in W_{K+1}$, we have $\Omega_{w}(u_i) = 0$ for all $i$, hence $S$ loses. Since $W_k = \varnothing$, $U_k$ correspondings to a dominating set in $G$ of size $k \leq K + 1$. By the promise that $G$ has an MDS of even cardinality, such a dominating set certifies that the minimum dominating set of $G$ has size at most $K$.
\end{proof}

To prove Theorem~\ref{thm:hardness}, we must additionally show that finding a winning Wordle strategy is in NP. We will show that every $\WI$ with a winning strategy admits a winning strategy that can be described in $O(\poly(\abs{\WI}))$-space and verified in $O(\poly(\abs{\WI}))$ time. To this end, we rely on the characterization of NP as the class of decision problems that admit efficiently verifiable proof systems (see, e.g.,~\cite{Goldreich2008-computational}, Definition~2.5 and Theorem~2.8).

\begin{proof}[Proof of Theorem~\ref{thm:hardness}]
  By Lemma~\ref{lem:hardness}, determining if a Wordle instance has a winning strategy is NP-hard. All that remains is to show that the problem is in NP. To this end, let $\WI = (\Sigma, d, D, W, g)$ be a Wordle instance.

  First suppose $\WI$ admits a winning strategy $S$. Then $\WI$ admits an \emph{informative} winning strategy $S'$ (Definition~\ref{dfn:informative}). By Corollary~\ref{cor:strategy-tree-size}, the strategy tree $T = T(S') = (V, E)$ can be encoded in $\poly(\abs{\WI})$ space. Moreover, the correctness of $T$ can be vefified in $\poly(\abs{\WI})$ time by applying the $\filter$ procedure to each edge $e \in E$ in order to verify that $T$ satisfies Definition~\ref{dfn:strategy-tree}.

  Conversely, suppose $\WI$ does not admit a winning strategy. Then given any description of a strategy tree $T$, either the validation of $T$ using $\filter$ on each edge will fail, or the tree $T$ has depth greater $g - 1$. Thus, any purported strategy tree $T$ will be rejected. Therefore, $\WinningStrategy$ is in NP, as desired.
\end{proof}

\begin{rem}
  \label{rem:dictionary}
  As noted in Remark~\ref{rem:certificate-size}, the size of the certificate (i.e., strategy tree) does not depend explicitly on $\abs{D}$. As suggested in Open Problem~3 in~\cite{Lokshtanov2022-wordle}, one could consider a setting in which $\abs{D}$ is definited implicitly as, say, the words accepted by a finite automaton or polynomial time algorithm. In this setting, $\abs{D}$ may be exponential in $\abs{\WI}$. Nonetheless, the argument above still yields that in this generalized setting, $\WinningStrategy$ remains in NP. All that is required is to add a step to the certificate verification procedure that checks that each query $u$ satisfies $u \in D$.
\end{rem}

\section{Conclusion and Questions}
\label{sec:conclusion}

We conclude with a few related open questions.

\begin{enumerate}
\item Does Wordle (with the fixed word list and dictionary in the game's original implementation) admit a winning strategy? Is there a practical winning strategy, say, that can be executed by a human player?
\item Consider a distributional variant of Wordle in which each target word $w$ is assigned a probability, $p(w)$---the probability with which $w$ is chosen. In this model, we can define two optimization problems:
  \begin{itemize}
  \item find a strategy that maximizes the winning probability with respect to $p$;
  \item find a strategy that minimizes the expected number of guesses needed to find the target word.
  \end{itemize}
  The first variation is NP-hard, as $\WinningStrategy$ is the restricted decision problem, ``Can the winning probability be made $1$?'' Is this variant in NP? Is the second variant NP-hard?
\end{enumerate}

%% Questions:

%% - Practical stategies?

%% - Still hard with additional restrictions? (E.g., finite alphabet)

%% - Approximate solutions?

%% - FPT Algorithms for $g$?

%\input{conclusions}

\urlstyle{same}
\bibliographystyle{plainnat}
\bibliography{wordle}

\end{document}